\newcommand{\be}{\begin{equation}}
\newcommand{\ee}{\end{equation}}
\begin{document}
\title[Heavy Tails Make Happy Buyers]{Heavy Tails Make Happy Buyers}
\author{Eric Bax}

\begin{abstract}
In a second-price auction with i.i.d. (independent identically distributed) bidder valuations, adding bidders increases expected buyer surplus if the distribution of valuations has a sufficiently heavy right tail. While this does not imply that a bidder in an auction should prefer for more bidders to join the auction, it does imply that a bidder should prefer it in exchange for the bidder being allowed to participate in more auctions. Also, for a heavy-tailed valuation distribution, marginal expected seller revenue per added bidder remains strong even when there are already many bidders. 
\end{abstract}

\maketitle

\section{Introduction} \label{sec_introduction}
In a second-price sealed-bid auction, also known as a Vickrey auction \cite{vickrey61}, bidders submit their bids, the item is awarded to the bidder with the highest bid, and the winner pays the runner-up bid. The seller's revenue is the runner-up bid. Adding a bidder increases expected revenue for the seller by increasing competition: if the new bid is the highest bid, then the new bidder wins and pays the previous highest bid; if the new bid is the runner-up, then it increases the price for the winner.  

In this paper, we consider whether adding a bidder can also increase expected buyer surplus. Assuming bidders bid their valuations, which is a dominant strategy \cite{vickrey61,krishna02,milgrom05}, the buyer's surplus is the difference between the highest bid (the buyer's valuation) and the runner-up bid (the buyer's price). If the new bid is the highest bid, then the new bidder becomes the new buyer and buyer surplus increases if the difference between the new bid and the previous highest bid is greater than the difference between the previous highest and runner-up bids. So buyer surplus benefits from valuations that are right-tail outliers. We show that exponential distributions maintain expected buyer surplus as bidders are added, and Pareto distributions increase expected buyer surplus. We also show that as tails get heavier the ratio of expected buyer surplus to expected seller revenue increases, even with optimal reserve prices.

Ideally, we would show that adding a bidder is a Pareto improvement \cite{pareto96} (in expectation), meaning that it benefits the seller or one of the bidders and it harms no seller or bidder. But this is impossible: adding a bidder decreases expected surplus for the previous bidders, because the new bid may reduce the previous winner's surplus by being the new runner-up bid or eliminate the previous winner's surplus by being the new winning bid. However, we show that adding a bidder is a Pareto improvement in expectation if the additional bidder is symmetric to the previous bidders. For example, if adding a bidder increases expected buyer surplus, then it is a Pareto improvement to have all bidders participate in an auction rather than select a subset of bidders at random and exclude the others. Similarly, for a set of auctions with each bidder excluded from one auction and all others participating, it is a Pareto improvement for all bidders to participate in all auctions.

Our analysis uses order statistics \cite{david03} of the distribution of bidder valuations. Order statistics are often used to analyze auctions \cite{krishna02,milgrom05,nisan07}, since, for truthful second-price auctions, the winning bid is the greatest order statistic among valuations and the price is the next-greatest order statistic. Second-price auctions (and variations on them) are the basis for online marketplaces for advertising \cite{edelman07,varian09} and for goods and services \cite{milgrom05}. Distributions of valuations are a subject of much study, because they determine optimal reserve prices \cite{myerson81,riley81}, and estimates of the distributions determine reserve prices for actual marketplaces \cite{ostrovsky09,cole14,blum15}. 

Our results depend on the shape of the valuation distribution. With a uniform distribution over some range, adding more bidders drives both the average winning bid and the average runner-up bid toward the right of the range, pushing these averages closer to each other, which squeezes the winner's surplus. In contrast, with a heavy-tailed distribution, adding bidders makes it more likely that an extreme right outlier bid will occur, boosting the winner's surplus as well as the seller's revenue. Valuation distribution shapes play a role in some other results about auctions, including determining whether imposing Pigovian taxes \cite{pigou12} on bidders to account for their negative externalities \cite{burrows79,eskeland92,abrams08} can increase surplus for bidders as a class \cite{stourm17}. Heavy tails are also a topic of interest in the study of complexity theory \cite{waldorp92,carlson02,hilbert14,wierman14}, in marketing \cite{anderson06}, in characterizing distributions based on samples \cite{kulldorff73,vannman76,markovich07}, and in the study of a variety of physical and social phenomena \cite{pareto96,seal80,reed04,schroeder10}.

\section{Heavy Tails and Buyer Surplus}
Assume there are $n+1$ bidders. Let $X_1, \ldots, X_{n+1}$ be the bidders' valuations, drawn i.i.d. from a distribution $D$. Let $X_{(1)}, \ldots, X_{(n+1)}$ be the order statistics for $X_1, \ldots, X_{n+1}$ -- the values ranked from least to greatest, with any ties broken randomly. 

Let $p_{n+1}$ be the buyer's surplus in a second-price auction if all $n+1$ bidders participate, and let $p_n$ be the buyer's surplus if only the first $n$ bidders participate. Assume each bidder bids their private value, since the second-price auction is truthful \cite{vickrey61,krishna02,milgrom05}. Recall that buyer surplus is the difference between the highest bid and the runner-up bid. The following theorem gives a condition for adding bidder $n+1$ to increase expected buyer surplus, based on the order statistics of drawing $n+1$ bids i.i.d. from $D$.

\begin{theorem} \label{thm1}
$$ E(p_{n+1} - p_n) = \frac{1}{n+1} [E(X_{(n+1)} - X_{(n)}) - 2 E(X_{(n)} - X_{(n-1)})], $$
where all expectations are over $(X_1, \ldots, X_{n+1}) \sim D^{n+1}$.
\end{theorem}

\begin{proof}
To assess the impact of adding bidder $n+1$ to an auction with $n$ bidders, we will assess the (opposite) impact of removing bidder $n+1$ from an auction with $n+1$ bidders. Note that
$$ p_{n+1} = X_{(n+1)} - X_{(n)}. $$

With probability $\frac{1}{n+1}$, $X_{n+1}$ is $X_{(n+1)}$ (the highest bid), so removing bidder $n+1$ from the auction removes the winning bid. Then the new winning bid is $X_{(n)}$, and the new runner-up bid is $X_{(n-1)}$. So buyer surplus becomes
$$ p_n = X_{(n)} - X_{(n-1)}. $$
Also with probability $\frac{1}{n+1}$, $X_{n+1}$ is $X_{(n)}$. If so, then removing bidder $n+1$ removes the runner-up bid, decreasing the second price from $X_{(n)}$ to $X_{(n-1)}$ and increasing buyer surplus to 
$$ p_n = X_{(n+1)} - X_{(n-1)}. $$
Alternatively, with probability $\frac{n-1}{n+1}$, $X_{n+1}$ is neither the highest nor the runner-up bid. Then removing bidder $n+1$ has no effect on buyer surplus: $p_n = p_{n+1}$.

The cases are: (a) $X_{n+1}$ is $X_{(n+1)}$, (b) $X_{n+1}$ is $X_{(n)}$, and (c) $X_{n+1}$ is one of $X_{(1)}, \ldots, X_{(n-1)}$. For each case, multiply the probability of occurrence by the expectation of $p_{n+1} - p_n$ given the occurrence:
$$ E (p_{n+1} - p_n) $$
$$ = \frac{1}{n+1} E [(X_{(n+1)} - X_{(n)}) - (X_{(n)} - X_{(n-1)})]$$
$$ + \frac{1}{n+1} E [(X_{(n+1)} - X_{(n)}) - (X_{(n+1)} - X_{(n-1)})] $$
$$ + \frac{n-1}{n+1} E [0]. $$
Use linearity of expectation and gather terms:
$$ = \frac{1}{n+1} E [X_{(n+1)} - 3 X_{(n)} + 2 X_{(n-1)}]. $$
Use linearity of expectation again and rearrange terms:
$$ = \frac{1}{n+1} [E(X_{(n+1)} - X_{(n)}) - 2 E(X_{(n)} - X_{(n-1)})].$$
\end{proof}

The theorem says that adding a bidder increases expected buyer surplus if the expected difference between the top two bids with $n+1$ bidders is at least twice the expected difference between the second and third-highest bids. For this to occur, the bid distribution must tend to supply bids with the first gap between bids at least twice the second gap. In other words, $D$ must tend to produce winners that are outliers.

\subsection{The Border: Exponential}
Exponential distributions have cdf:
\[ F(x) = \left\{ 
\begin{array}{ll}
1 - e^{-\lambda x} & \mbox{if $x \geq 0$}\\
0 & \mbox{if $x < 0$}\end{array} 
\right., \]
where $\lambda > 0$. 

For exponential distributions, adding a bidder does not affect the expected buyer surplus in a second-price auction:

\begin{corollary}
If $D$ is an exponential distribution, then $E (p_{n+1} - p_n) = 0$.
\end{corollary}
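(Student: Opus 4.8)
The plan is to invoke Theorem~\ref{thm1}, which already writes $E(p_{n+1}-p_n)$ as $\frac{1}{n+1}$ times $E(X_{(n+1)}-X_{(n)}) - 2\,E(X_{(n)}-X_{(n-1)})$. So it suffices to prove, for $n+1$ i.i.d.\ draws from an exponential distribution with rate $\lambda$, the single identity $E(X_{(n+1)}-X_{(n)}) = 2\,E(X_{(n)}-X_{(n-1)})$: the expected top gap is exactly twice the expected second gap, so the bracket vanishes and $E(p_{n+1}-p_n)=0$ for every $\lambda$ and every $n$.

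First I would recall the R\'enyi representation of exponential order statistics (see \cite{david03}): if $Y_1,\dots,Y_m$ are i.i.d.\ exponential with rate $\lambda$, with order statistics $Y_{(1)}\le\cdots\le Y_{(m)}$ and $Y_{(0)}:=0$, then the spacings $Y_{(k)}-Y_{(k-1)}$, $k=1,\dots,m$, are mutually independent and $Y_{(k)}-Y_{(k-1)}\sim\mathrm{Exp}\!\big(\lambda(m-k+1)\big)$, so $E\big(Y_{(k)}-Y_{(k-1)}\big)=\frac{1}{\lambda(m-k+1)}$. This is just memorylessness of the exponential: conditioned on the running minimum, the remaining excesses are again i.i.d.\ exponential, and the minimum of $j$ i.i.d.\ $\mathrm{Exp}(\lambda)$ variables is $\mathrm{Exp}(\lambda j)$. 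If one prefers to avoid quoting the representation, the two moments actually needed follow from $E(X_{(k)})=\frac{1}{\lambda}(H_{n+1}-H_{n+1-k})$ with $H$ the harmonic numbers, or from integrating the standard order-statistic densities directly.

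Applying this with $m=n+1$: the top gap is the case $k=n+1$, giving $E(X_{(n+1)}-X_{(n)})=\frac{1}{\lambda(n+1-(n+1)+1)}=\frac{1}{\lambda}$, and the second gap is the case $k=n$, giving $E(X_{(n)}-X_{(n-1)})=\frac{1}{\lambda(n+1-n+1)}=\frac{1}{2\lambda}$. Hence $E(X_{(n+1)}-X_{(n)})=\frac{1}{\lambda}=2\cdot\frac{1}{2\lambda}=2\,E(X_{(n)}-X_{(n-1)})$, and substituting into Theorem~\ref{thm1} gives $E(p_{n+1}-p_n)=0$.

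The step I expect to require the most care is the spacing computation --- i.e., pinning down that the top two gaps of $n+1$ exponentials have means $\frac{1}{\lambda}$ and $\frac{1}{2\lambda}$; everything else is bookkeeping through Theorem~\ref{thm1}. A useful sanity check to include: the conclusion does not depend on $\lambda$, as it must not, since rescaling an exponential rescales every gap by the same factor and therefore cannot change the sign of $E(p_{n+1}-p_n)$; the exponential is exactly the knife's edge where the factor is $2$.
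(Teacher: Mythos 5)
Your proof is correct and follows essentially the same route as the paper: both reduce the claim via Theorem~\ref{thm1} to showing the top spacing has twice the mean of the second spacing, using the R\'enyi representation of exponential order statistics (the paper writes $X_{(i)}$ as a sum of scaled unit exponentials, which is exactly your independent-spacings form). The mean computations ($\frac{1}{\lambda}$ and $\frac{1}{2\lambda}$) match the paper's.
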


\begin{proof}
We will show that 
$$ E(X_{(n+1)} - X_{(n)}) - 2 E(X_{(n)} - X_{(n-1)}) = 0. $$
The order statistics for an exponential distribution have known distributions \cite{renyi53} \cite{feller66} (pg. 20) \cite{david03} (Ch. 2):
$$ X_{(i)} \sim \frac{1}{\lambda} \left( \sum_{j=1}^{i} \frac{Z_j}{n-j+2} \right), $$
where $Z_j$ are exponential random variables with $\lambda = 1$. Since the mean of an exponential distribution is $\frac{1}{\lambda}$, $\forall j: E Z_j = 1$. So
$$ E(X_{(n+1)} - X_{(n)}) - 2 E(X_{(n)} - X_{(n-1)}) $$
$$ = \left(\frac{1}{\lambda} \sum_{j=1}^{n+1} \frac{1}{n-j+2} - \frac{1}{\lambda} \sum_{j=1}^{n} \frac{1}{n-j+2} \right) $$
$$- 2 \left(\frac{1}{\lambda} \sum_{j=1}^{n} \frac{1}{n-j+2} - \frac{1}{\lambda} \sum_{j=1}^{n-1} \frac{1}{n-j+2} \right) $$
$$ = \frac{1}{\lambda} \left(\sum_{j=n+1}^{n+1} \frac{1}{n-j+2} - 2 \sum_{j=n}^{n} \frac{1}{n-j+2}\right) $$
$$ = \frac{1}{\lambda} \left(1 - 2 (\frac{1}{2}) \right) $$
$$ = 0.$$ 
\end{proof}

Compare expected buyer surplus to expected seller revenue:

\begin{theorem} \label{thm_ratio_exp}
For second-price auctions with $n+1$ bidders having i.i.d. exponential valuation distributions, the ratio of expected seller revenue to expected buyer surplus is approximately $\ln (n+1) + \gamma - 1$, where $\gamma$ is the Euler - Mascheroni constant ($\gamma \approx 0.5772$).
\end{theorem}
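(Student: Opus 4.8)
The plan is to compute the two expectations separately using the Rényi order-statistic representation already invoked for the corollary, and then take their quotient. With $n+1$ bidders, expected seller revenue is the expected runner-up (second-highest) bid $E(X_{(n)})$, and expected buyer surplus is $E(p_{n+1}) = E(X_{(n+1)} - X_{(n)})$, so the object of interest is the exact ratio $E(X_{(n)}) / E(X_{(n+1)} - X_{(n)})$, which I will evaluate and only then approximate.

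For the denominator, I would read off from $X_{(i)} \sim \frac{1}{\lambda}\sum_{j=1}^{i}\frac{Z_j}{n-j+2}$ (with $E Z_j = 1$) that $E(X_{(n+1)} - X_{(n)}) = \frac{1}{\lambda}\cdot\frac{1}{n-(n+1)+2} = \frac{1}{\lambda}$; equivalently, the top spacing of $n+1$ i.i.d.\ exponentials is itself exponential with rate $\lambda$. This is consistent with the corollary, since expected buyer surplus is $\frac{1}{\lambda}$ independent of $n$. For the numerator, the same representation gives $E(X_{(n)}) = \frac{1}{\lambda}\sum_{j=1}^{n}\frac{1}{n-j+2}$, and reindexing with $k = n-j+2$ turns this into $\frac{1}{\lambda}\sum_{k=2}^{n+1}\frac{1}{k} = \frac{1}{\lambda}(H_{n+1}-1)$, where $H_m$ denotes the $m$-th harmonic number. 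Dividing, the $\frac{1}{\lambda}$ factors cancel and the ratio equals $H_{n+1} - 1$ exactly.

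The last step is to invoke the standard asymptotic $H_m = \ln m + \gamma + O(1/m)$ to rewrite $H_{n+1} - 1$ as approximately $\ln(n+1) + \gamma - 1$, which is the claimed form. I do not anticipate a real obstacle: the only inexactness in the statement is this harmonic-number approximation, with everything else exact. The one point that needs care is bookkeeping of indices — applying the Rényi representation with $n+1$ draws so that the denominators run $n+1, n, \ldots, 1$ as $j$ runs $1, \ldots, n+1$, ensuring the top gap has rate $\lambda$ and the sum for $E(X_{(n)})$ runs from $\frac{1}{2}$ up to $\frac{1}{n+1}$ rather than being off by one term at either end.
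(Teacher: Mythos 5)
Your proposal is correct and follows essentially the same route as the paper: both compute $E(X_{(n+1)}-X_{(n)})=\frac{1}{\lambda}$ and $E(X_{(n)})=\frac{1}{\lambda}\left(\frac{1}{2}+\cdots+\frac{1}{n+1}\right)$ from the R\'enyi representation and then apply the harmonic-number asymptotic $H_m\approx\ln m+\gamma$. Your explicit reindexing to $H_{n+1}-1$ and the note that the only approximation is in the final step are minor refinements of the paper's identical calculation.
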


\begin{proof}
Expected buyer surplus is the expected difference between the top two bids:
$$ E(X_{(n+1)} - X_{(n)}) $$
$$ = \left(\frac{1}{\lambda} \sum_{j=1}^{n+1} \frac{1}{n-j+2} - \frac{1}{\lambda} \sum_{j=1}^{n} \frac{1}{n-j+2} \right) $$
$$ = \frac{1}{\lambda}. $$
Expected seller revenue is the expected second bid:
$$ E X_{(n)} = \frac{1}{\lambda} \sum_{j=1}^{n} \frac{1}{n-j+2} $$
$$ = \frac{1}{\lambda} \left(\frac{1}{2} + \frac{1}{3} + \ldots + \frac{1}{n+1} \right) $$
$$ \approx \frac{1}{\lambda} \left(\ln (n+1) + \gamma - 1 \right) $$
because $1 + \frac{1}{2} + \ldots + \frac{1}{n} \approx \ln n + \gamma$.
\end{proof}

The seller takes the lion's share of the total auction value as the number of bidders increases: as $n \rightarrow \infty$, $\ln n \rightarrow \infty$, but expected buyer surplus remains constant. Can the buyer instead take the lion's share? Yes, if the tail is heavy enough, as we show later. 

But first, consider how using an optimal auction \cite{myerson81,riley81} with a reserve price, instead of a straight second-price auction, would affect our results. In Appendix \ref{optimal_expo}, we show that with the Myerson-optimal reserve price of $\frac{1}{\lambda}$, the expected buyer surplus with $n+1$ bidders is 
$$  \frac{1}{\lambda} [1 - (1 - e^{-1})^{n+1}]. $$
This converges from below to the result for the second-price auction, $\frac{1}{\lambda}$, as $n$ increases, because the reserve price is less likely to have an impact as more bidders are added. So expected buyer surplus increases with each new bidder, but the amount of increase decreases with each new bidder, making the increase in expected buyer surplus nearly zero for large $n$. 

Both auction types have the same sum of seller revenue and buyer surplus; it is the top bid. Since the optimal auction charges the buyer a bit more in expectation, it increases expected seller revenue by the same amount: $\frac{1}{\lambda} (1 - e^{-1})^{n+1}$, and expected seller revenue still dominates expected buyer surplus as $n$ increases.

\subsection{Over the Border: Pareto}
Type I Pareto distributions have cdf:
\[ F(x) = \left\{ \begin{array}{ll}
1 -  \left(\frac{a}{x}\right)^{v} & \mbox{if $x \geq a$}\\
0 & \mbox{if $x < a$} \end{array} \right. , \]
where $a > 0$ and $v > 1$.

Pareto distributions were developed to model the distribution of wealth in societies \cite{pareto96}. Since then, they have been used to model many distributions, including those of stock price returns \cite{reed04}, hard drive errors \cite{schroeder10}, reserves per oil field \cite{reed04}, and populations of human settlements \cite{reed04}. Pareto distributions are heavy-tailed, meaning that they have a higher probability of a maximum sample that is much greater than the other samples, compared to exponential distributions. 

For $n+1$ samples, the expectations of order statistics are \cite{huang75} \cite{david03} (pg. 52):
$$ E X_{(i)} = a \frac{(n+1)!}{(n+1-i)!} \frac{\Gamma(n + 2 - i - \frac{1}{v})}{\Gamma(n + 2 - \frac{1}{v})},$$
where $\Gamma()$ is the gamma function, which is a continuous version of the factorial: $\Gamma(n + 1) = n!$ for all integers $n$ and $\Gamma(z+1) = z \Gamma(z)$ for all real $z > 0$. We use both of these equalities to simplify expectations for order statistics in the next proof.

\begin{corollary}
If $D$ is a Type I Pareto distribution, then adding a bidder increases expected buyer surplus: $E (p_{n+1} - p_n) > 0$.
\end{corollary}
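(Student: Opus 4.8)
The plan is to reduce the claim to an inequality among expected order statistics via Theorem~\ref{thm1}, and then to verify that inequality by substituting the Pareto closed form for $E X_{(i)}$ displayed just above. Concretely, Theorem~\ref{thm1} gives $E(p_{n+1}-p_n) = \frac{1}{n+1}[E(X_{(n+1)}-X_{(n)}) - 2E(X_{(n)}-X_{(n-1)})]$, so it suffices to show $E(X_{(n+1)}-X_{(n)}) > 2\,E(X_{(n)}-X_{(n-1)})$; in fact I will compute all three of $E X_{(n+1)}$, $E X_{(n)}$, $E X_{(n-1)}$ in closed form and read off the sign.

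First I would specialize $E X_{(i)} = a\frac{(n+1)!}{(n+1-i)!}\frac{\Gamma(n+2-i-1/v)}{\Gamma(n+2-1/v)}$ to $i \in \{n-1, n, n+1\}$. Writing $b := 1/v$, the hypothesis $v > 1$ means $0 < b < 1$, which keeps every gamma argument positive (so all three means are finite). The three numerators are then $\Gamma(3-b)$, $\Gamma(2-b)$, $\Gamma(1-b)$, and applying $\Gamma(z+1)=z\Gamma(z)$ once and twice rewrites the first two as $(2-b)(1-b)\Gamma(1-b)$ and $(1-b)\Gamma(1-b)$. Pulling out the common positive constant $C := a\,(n+1)!\,\Gamma(1-b)/\Gamma(n+2-b)$, this yields $E X_{(n+1)} = C$, $E X_{(n)} = (1-b)C$, and $E X_{(n-1)} = \tfrac{1}{2}(2-b)(1-b)C$.

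Next I would assemble the two gaps: $E(X_{(n+1)}-X_{(n)}) = bC$ and $E(X_{(n)}-X_{(n-1)}) = (1-b)C\bigl(1 - \tfrac{2-b}{2}\bigr) = \tfrac{1}{2}b(1-b)C$. Substituting into Theorem~\ref{thm1} gives $E(p_{n+1}-p_n) = \frac{1}{n+1}\bigl(bC - b(1-b)C\bigr) = \frac{b^2 C}{n+1}$, which is strictly positive because $b>0$ and $C>0$. Equivalently, the required inequality $bC > b(1-b)C$ collapses to $b > 0$, which is exactly the content of $v > 1$.

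I do not expect a genuine obstacle here: the whole argument is bookkeeping with the gamma recursion, and the strict inequality is a direct consequence of the standing assumption $v>1$ (i.e., of the tail being heavy enough for the mean to exist). The only point needing care is the range of validity — the formula from Theorem~\ref{thm1} involves $X_{(n-1)}$, so the corollary as stated presumes $n \geq 2$; the two-bidder case ($n=1$) would need a separate, easy comparison against the single-bidder surplus $X_{(1)}$.
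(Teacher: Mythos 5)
Your proposal is correct and follows essentially the same route as the paper: apply Theorem~\ref{thm1}, substitute the closed-form Pareto order-statistic means, factor out the common positive constant (your $C$ is the paper's $g(n,a,v)$), and reduce the sign to $1/v^2 > 0$. The only addition is your (reasonable) caveat that the formula presumes $n \geq 2$, which the paper leaves implicit.
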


\begin{proof}
We will show that 
$$ E(X_{(n+1)} - X_{(n)}) - 2 E(X_{(n)} - X_{(n-1)}) > 0. $$
Let 
$$ g(n, a, v) \equiv a \frac{\Gamma(n+2)}{\Gamma(n+2-\frac{1}{v})} \Gamma(1 - \frac{1}{v}). $$
Then
$$ E X_{(n+1)} = g(n, a, v), $$
$$ E X_{(n)} = g(n, a, v) (1 - \frac{1}{v}),$$
and
$$ E X_{(n-1)} = \frac{1}{2} g(n, a, v) (1 - \frac{1}{v}) (2 - \frac{1}{v}).$$
So
$$ E(X_{(n+1)} - X_{(n)}) - 2 E(X_{(n)} - X_{(n-1)}) $$
$$ = g(n, a, v) \left( \left[1 - (1 - \frac{1}{v})\right] - 2 \left[ (1 - \frac{1}{v}) - \frac{1}{2} (1 - \frac{1}{v}) (2 - \frac{1}{v})\right] \right) $$
The terms in $g()$ are all positive. The term in parentheses is:
$$ \frac{1}{v} - (1 - \frac{1}{v})(2 - (2 - \frac{1}{v}))  $$
$$ = \frac{1}{v} - (1 - \frac{1}{v}) \frac{1}{v} $$
$$ = \frac{1}{v^2} > 0. $$ 
\end{proof}

Compare expected buyer surplus to expected seller revenue. Expected buyer surplus is
$$ E X_{(n+1)} - E X_{(n)} $$
$$ = g(n, a, v) \left(1 - (1 - \frac{1}{v})\right) $$
$$ =  g(n, a, v) \frac{1}{v}. $$
Expected seller revenue is
$$ E X_{(n)} = g(n, a, v) (1 - \frac{1}{v}). $$

Since 
$$ \frac{\Gamma(n+2)}{\Gamma(n+2-\frac{1}{v})} \approx (n+1)^{\frac{1}{v}}, $$
expected buyer surplus is
$$ \approx a \frac{1}{v} (n+1)^{\frac{1}{v}} \Gamma(1 - \frac{1}{v}),$$
and expected seller revenue is 
$$ \approx a (1 - \frac{1}{v}) (n+1)^{\frac{1}{v}} \Gamma(1 - \frac{1}{v}).$$
Both increase with the number of bidders.

The ratio of expected seller revenue to expected buyer surplus is:
$$ \frac{1 - \frac{1}{v}}{\frac{1}{v}} = v - 1.$$
So expected seller revenue equals expected buyer surplus at $v = 2$. As $v \rightarrow \infty$, the seller takes more of the value; as $v \rightarrow 1$, the buyer takes more.

\subsection{Beyond, to Infinity}
Consider what happens as $v \rightarrow 1$ in more detail. Recall that expected seller revenue is 
$$ E X_{(n)} = g(n,a,v) (1 - \frac{1}{v})$$
$$ = a \frac{\Gamma(n+2)}{\Gamma(n+2-\frac{1}{v})} \Gamma(1 - \frac{1}{v}) (1 - \frac{1}{v}),$$
and $\Gamma(1 - \frac{1}{v}) (1 - \frac{1}{v}) = \Gamma(2 - \frac{1}{v})$, so
$$ = a \frac{\Gamma(n+2)}{\Gamma(n+2-\frac{1}{v})} \Gamma(2 - \frac{1}{v}).$$
Since $\Gamma(n+1) = n!$ for all integers $n$ and $\Gamma(1) = 1$, expected seller revenue approaches $a (n+1)$.

For expected buyer surplus:
$$ \lim_{v \rightarrow 1} \frac{1}{v} g(n, a, v) $$
$$ =  \lim_{v \rightarrow 1} \frac{1}{v} a \frac{\Gamma(n+2)}{\Gamma(n+2-\frac{1}{v})} \Gamma(1 - \frac{1}{v}) $$
$$ = a (n + 1) \lim_{v \rightarrow 1} \Gamma(1 - \frac{1}{v}).$$
And
$$ \lim_{z \rightarrow 0^{+}} \Gamma(z) = \infty. $$
So expected buyer surplus approaches infinity. 

Can the seller impose a reserve price to raise expected seller revenue? The answer is no, because with $v = 1$ we have the equal revenue distribution. To see this, consider expected seller revenue with reserve price $r$ and $v = 1$ for $n$ bidders:
$$ n \int_{r}^{\infty} r F(r)^{n-1} f(x) \, dx $$
$$ + n (n-1) \int_{r}^{\infty} x F(x)^{n-2} f(x) [1 - F(x)] \, dx $$
The first term is revenue from the reserve price: $n$ possible top bidders, with revenue $r$ if the other $n-1$ bids are below $r$ (probability $F(r)^{n-1}$) and the top bid is above $r$ (probability $f(x) \, dx$ integrated over $x$ from $r$ to $\infty$). The second term is revenue from the second price: $n (n-1)$ possible top and second bidder combinations, with revenue the second price $x$ if the other $n-2$ bids are below $x$ (probability $F(x)^{n-2}$) and the top bid is above $x$ (probability $[1 - F(x)]$). 

The first term (revenue from reserve price) is 
$$ = anr(1 - \frac{a}{r})^{n-1} \int_{r}^{\infty} x^{-2} \, dx $$
$$ = anr (1 - \frac{a}{r})^{n-1} \left. \left[- x^{-1}\right] \right|_{r}^{\infty} $$
$$ = an (1 - \frac{a}{r})^{n-1}.$$
The second term (revenue from second price) is
$$ = n (n-1) \int_{r}^{\infty} a^2 (1 - \frac{a}{x})^{n-2} \, dx $$
$$ = an \left. \left[ (1 - \frac{a}{x})^{n-1} \right] \right|_{r}^{\infty} $$
$$ = an - an (1 - \frac{a}{r})^{n-1}.$$
The two terms sum to $an$; the reserve price has no effect. For this reason, the distribution with $v=1$ is known as the equal-revenue distribution. (For more discussion of optimal reserve prices and Type I Pareto valuations, refer to Appendix \ref{optimal_expo}.)

\section{Should Bidders Welcome New Bidders?}
If adding a bidder increases expected surplus for the winner, then should $n$ bidders welcome one more bidder to an auction? No, because the new bidder can decrease surplus for the previous winner -- by being the new winner or the new runner-up -- and cannot increase surplus for the previous bidders. Stated another way: adding a bidder decreases other bidders' probabilities of winning and may potentially raise the winner's price. In fact:

\begin{theorem} \label{per_bidder_thm}
For any bidder valuation distribution $D$ that has support over at least two different values, expected surplus per participating bidder decreases with each additional bidder:
$$ \frac{E p_{n+1}}{n+1} < \frac{E p_n}{n}. $$
\end{theorem}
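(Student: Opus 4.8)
The plan is to rewrite both $E p_n$ and $E p_{n+1}$ in terms of order-statistic gaps for a single pool of $n+1$ i.i.d.\ draws, reduce the claimed strict inequality to one positivity statement, and then settle that using the support hypothesis. (Throughout I assume the relevant expectations are finite; otherwise both sides of the inequality can be infinite, as happens for the $v = 1$ Pareto distribution discussed above.)

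First I would introduce, with all order statistics taken over $(X_1, \ldots, X_{n+1}) \sim D^{n+1}$, the quantities $A = E(X_{(n+1)} - X_{(n)})$ and $B = E(X_{(n)} - X_{(n-1)})$. Since $p_{n+1} = X_{(n+1)} - X_{(n)}$ we have $E p_{n+1} = A$, and Theorem~\ref{thm1} gives $E(p_{n+1} - p_n) = \frac{1}{n+1}(A - 2B)$, so $E p_n = A - \frac{1}{n+1}(A - 2B) = \frac{nA + 2B}{n+1}$. Substituting into the target,
$$ \frac{E p_n}{n} - \frac{E p_{n+1}}{n+1} = \frac{nA + 2B}{n(n+1)} - \frac{A}{n+1} = \frac{2B}{n(n+1)}, $$
so the theorem is equivalent to $B = E(X_{(n)} - X_{(n-1)}) > 0$.

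It then remains to show $B > 0$, and this is where the hypothesis that $D$ has support on at least two distinct values enters. Pointwise $X_{(n)} \geq X_{(n-1)}$, so it suffices to exhibit an event of positive probability on which the inequality is strict. Since the support of $D$ contains two distinct points, there is a value $c$ with $q := \Pr_{X \sim D}(X \geq c) \in (0, 1)$. The event that exactly two of the $n+1$ draws are at least $c$ has probability $\binom{n+1}{2} q^2 (1 - q)^{n-1} > 0$, and on that event the second-largest draw is at least $c$ while the third-largest is below $c$, i.e.\ $X_{(n)} \geq c > X_{(n-1)}$. Hence $B > 0$ and the strict inequality follows. I expect this positivity step to be the only real content: everything preceding it is bookkeeping on top of Theorem~\ref{thm1}, and the only care needed there is to choose the threshold $c$ so that ``exactly two draws at least $c$'' pins $X_{(n)}$ above $c$ and $X_{(n-1)}$ below it; one should also keep the finiteness caveat in mind so the algebra solving for $E p_n$ is legitimate.
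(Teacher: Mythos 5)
Your proof is correct, but it takes a genuinely different route from the paper's. The paper argues directly by symmetry: the new bidder can only take the win from, or raise the price for, the previous winner, so adding bidder $n+1$ strictly reduces the total expected surplus of the original $n$ bidders (the support hypothesis rules out the degenerate case where that surplus is already zero almost surely); since all $n+1$ bidders are exchangeable, each has expected surplus $\frac{E p_{n+1}}{n+1}$, and the inequality $n \cdot \frac{E p_{n+1}}{n+1} < E p_n$ follows with no computation. You instead run the algebra of Theorem~\ref{thm1} backwards to get the exact identity
$$ \frac{E p_n}{n} - \frac{E p_{n+1}}{n+1} = \frac{2\, E(X_{(n)} - X_{(n-1)})}{n(n+1)}, $$
and then prove positivity of the gap $E(X_{(n)}-X_{(n-1)})$ via the ``exactly two draws above a threshold $c$'' event. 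What your approach buys is a quantitative formula for the per-bidder loss --- it identifies the decrease as exactly (twice) the expected second-to-third gap, scaled by $\frac{1}{n(n+1)}$ --- and it makes the role of the two-point support hypothesis completely transparent; what it costs is reliance on Theorem~\ref{thm1} and a finiteness assumption on the expectations (which you correctly flag, and which the paper's qualitative argument does not need in the same way). Two small caveats: your argument implicitly requires $n \geq 2$ so that $X_{(n-1)}$ exists and so that at least one of the $n+1$ draws lies below $c$ on your event (the paper's framework already presumes this), and you should note explicitly that a threshold $c$ with $\Pr(X \geq c) \in (0,1)$ exists because the support contains two distinct points; both are routine.
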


\begin{proof}
Because the new bidder can take the win from the previous winner or increase their price, adding a bidder reduces expected surplus for the previous bidders. By symmetry among bidders, the new bidder has the same expected surplus as the other bidders. So the added bidder must decrease expected surplus per participating bidder. 

This argument fails only if the top two bids among the previous bidders are equal with probability one. In that case, there is zero expected surplus, and adding a new bidder does not change that. But the theorem requires $D$ to have support over at least two different values, so the probability of a tie among the top two bids is less than one.
\end{proof}

If we consider the new bidder and previous bidders as a class, then the expected surplus per class member increases with wider participation if $E p_{n+1} > E p_n$, because that clearly implies:
$$ \frac{E p_{n+1}}{n+1} > \frac{E p_n}{n+1}. $$
Adding bidder $n+1$ increases their expected surplus from zero to $\frac{E p_{n+1}}{n+1}$. But it changes each previous bidder's expected surplus from $\frac{E p_n}{n}$ to $\frac{E p_{n+1}}{n+1}$, which is a decrease according to Theorem \ref{per_bidder_thm}. So this is not a Pareto improvement, because it harms the previous bidders. 

However, there is a Pareto improvement if each of the $n+1$ potential bidders are symmetric. For example, if one bidder selected at random is to be excluded from the auction, then it is a Pareto improvement to instead have all potential bidders participate, because each potential bidder increases their expected surplus by:
$$\frac{E p_{n+1}}{n+1} - \frac{E p_n}{n+1}. $$
Similarly, suppose there is a set of $n+1$ auctions, each auction excludes one bidder, and each bidder is excluded from one auction. Then it is a Pareto improvement to allow all bidders to participate in all auctions, because each of the $n+1$ bidders increases their expected surplus by:
$$(n+1) \frac{E p_{n+1}}{n+1} - n \frac{E p_n}{n}$$
$$ = E p_{n+1} - E p_n.$$

For a more general framework to evaluate whether a change in auction participation over multiple auctions is a Pareto improvement in expectation, assume that all bidders draw their valuations i.i.d. for each auction, though there may be different valuation distributions for different auctions. Let $i$ index potential bidders. Let $j$ index auctions. Let $I(i,j)$ be one if bidder $i$ is allowed to bid in auction $j$ and zero otherwise. For each auction $j$, let
$$ b_j = \sum_i I(i,j) $$
be the number of bidders. Let $E p_j(b)$ be the expected buyer surplus for auction $j$ if $b$ bidders participate. Then the expected surplus for potential bidder $i$ over all the auctions is
$$ \sum_j I(i,j) \frac{E p_j(b_j)}{b_j}. $$
Let $E r_j(b)$ be the seller's expected revenue for auction $j$ if $b$ bidders participate. If there is a single seller for all auctions, then their expected revenue over all auctions is
$$ \sum_j E r_j(b_j). $$
For a new bidding participation arrangement, let $I'(i,j)$ indicate whether bidder $i$ is allowed to bid in auction $j$. For each auction $j$, let
$$ b_j' = \sum_i I'(i,j) $$
be the new number of bidders. Then the new arrangement is a Pareto improvement in expectation if 
$$ \forall i: \sum_j I'(i,j) \frac{E p_j(b_j')}{b_j'} \geq \sum_j I(i,j) \frac{E p_j(b_j)}{b_j},$$
and, if there is a single seller,
$$ \sum_j E r_j(b'_j) \geq \sum_j E r_j(b_j), $$
and the inequality is strict for at least one bidder or the seller. If each auction has a different seller, then we require
$$ \forall j: E r_j(b'_j) \geq E r_j(b_j) $$
instead of the previous inequality.

\section{Heavy Tails and Seller Revenue}
Consider how the number of bidders affects expected seller revenue. In accord with common sense, having more bidders increases competition, which increases expected seller revenue:

\begin{theorem}
For any valuation distribution $D$ that has support over multiple values, adding a bidder to a second-price auction increases expected seller revenue.
\end{theorem}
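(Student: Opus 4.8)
The plan is to reuse the ``remove the last bidder'' device from the proof of Theorem~\ref{thm1}. Write $r_{n+1}$ for the seller's revenue when all $n+1$ bidders participate and $r_n$ for the revenue when only the first $n$ bidders participate. Since the price in a truthful second-price auction is the runner-up bid, $r_{n+1} = X_{(n)}$. I would then condition on the rank of $X_{n+1}$ among the $n+1$ draws, using exactly the three cases (a), (b), (c) from before.

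In case (a), $X_{n+1} = X_{(n+1)}$ (probability $\frac{1}{n+1}$): dropping bidder $n+1$ leaves $X_{(1)},\dots,X_{(n)}$, whose runner-up is $X_{(n-1)}$, so $r_n = X_{(n-1)}$ and $r_{n+1} - r_n = X_{(n)} - X_{(n-1)}$. In case (b), $X_{n+1} = X_{(n)}$ (probability $\frac{1}{n+1}$): dropping it leaves highest bid $X_{(n+1)}$ and runner-up $X_{(n-1)}$, so again $r_n = X_{(n-1)}$ and $r_{n+1} - r_n = X_{(n)} - X_{(n-1)}$. In case (c), $X_{n+1}$ is one of $X_{(1)},\dots,X_{(n-1)}$ (probability $\frac{n-1}{n+1}$): the top two bids are untouched, so $r_{n+1} - r_n = 0$. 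Combining by the law of total expectation and linearity,
\[ E(r_{n+1} - r_n) = \frac{2}{n+1}\, E\!\left(X_{(n)} - X_{(n-1)}\right). \]

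It remains to show this is strictly positive. Since order statistics satisfy $X_{(n)} \geq X_{(n-1)}$ pointwise, the expectation is at least $0$, and strict positivity follows once $P(X_{(n)} > X_{(n-1)}) > 0$. This is where the hypothesis that $D$ has support over more than one value is used: picking two support points $a < b$ with disjoint neighborhoods, the event that two of the $n+1$ draws land near $b$ while the remaining $n-1$ land near $a$ has positive probability, and on that event the second-highest draw exceeds the third-highest. Hence $E(r_{n+1} - r_n) > 0$. The only genuine obstacle is this last non-degeneracy step; everything else is the case bookkeeping already rehearsed for Theorem~\ref{thm1}, and one could even just quote the quantity $E(X_{(n)} - X_{(n-1)})$ appearing there.
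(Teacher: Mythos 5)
Your proof is correct, but it takes a genuinely different route from the paper's. The paper argues pointwise: the runner-up bid can only weakly increase when a bid is added, and since $D$ has at least two support points there is positive probability that the incumbent runner-up is not maximal in the support and that the new bid strictly exceeds it, so the expectation strictly increases. You instead recycle the rank-conditioning device from Theorem~\ref{thm1} to derive the exact identity
$E(r_{n+1} - r_n) = \tfrac{2}{n+1}\,E\bigl(X_{(n)} - X_{(n-1)}\bigr)$,
and then handle non-degeneracy separately. Your case bookkeeping is right (in both cases (a) and (b) the new runner-up after removal is $X_{(n-1)}$, so the two contributions coincide), and your two-support-points argument for $P(X_{(n)} > X_{(n-1)}) > 0$ is sound. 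What your approach buys is a quantitative formula that the paper's qualitative argument does not give: it is the natural companion to Theorem~\ref{thm1}, and indeed summing the two identities recovers $\tfrac{1}{n+1}E(X_{(n+1)} - X_{(n)})$ for the expected gain in total welfare, a useful consistency check. What the paper's argument buys is slightly greater economy and generality: it does not need $X_{(n-1)}$ to exist (i.e., it reads naturally even at the boundary of small $n$), and it avoids relying on the exchangeability step (that the rank of $X_{n+1}$ is uniform and independent of the order statistics) -- though that step is already used in Theorem~\ref{thm1}, so you are consistent with the paper's conventions.
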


\begin{proof}
Seller revenue is the second-highest bidder valuation. Regardless of the rank of the additional bid among itself and the previous bids, the second-highest bid cannot decrease. Let $s$ be the second-highest bid, excluding the additional one. Since there is support over multiple values, there is positive probability that $s$ is not a maximum among the distribution values that have support. If it is not a maximum, then there is positive probability that the additional bid is greater than $s$, which increases the second-highest bid if we include the additional bid.
\end{proof}

In general, heavier-tailed valuation distributions produce more expected seller revenue per added bidder:

\begin{theorem}
Let $s_{n+1}$ be expected seller revenue for $n+1$ bidders and let $s_n$ be expected seller revenue for $n$ bidders. Then:
\begin{itemize}
\item For a uniform valuation distribution: $D \sim U[0,1]$, $s_{n+1} - s_n \in \hbox{O}\left(\frac{1}{n^2}\right).$
\item For an exponential valuation distribution, $s_{n+1} - s_n \in \hbox{O}\left(\frac{1}{n}\right). $
\item For a Type I Pareto valuation distribution, $s_{n+1} - s_n \in \hbox{O}\left(\frac{1}{n^{1-\frac{1}{v}}}\right). $
\end{itemize} 
\end{theorem}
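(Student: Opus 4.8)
The plan is to handle the three distributions in parallel by reducing each to the expected value of the runner-up order statistic, for which closed or asymptotic forms are already available in (or immediately derivable from) the material above. Throughout, write $s_m$ for the expected seller revenue with $m$ bidders, i.e. the expectation of the second-largest of $m$ i.i.d. draws from $D$; since the largest of $m$ draws is the $m$-th order statistic, $s_m$ is the expectation of the $(m-1)$-th order statistic of a size-$m$ sample. (Note the index shift relative to the $n{+}1$-bidder notation used earlier, where the runner-up is $X_{(n)}$.) In each case I would obtain an expression for $s_m$ valid for all $m \ge 2$ and then estimate $s_{n+1} - s_n$ as $n \to \infty$ with the distribution parameters held fixed.

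For $D \sim U[0,1]$, the $k$-th order statistic of a size-$m$ sample has mean $\frac{k}{m+1}$, so $s_m = \frac{m-1}{m+1} = 1 - \frac{2}{m+1}$, and hence $s_{n+1} - s_n = \frac{2}{n+1} - \frac{2}{n+2} = \frac{2}{(n+1)(n+2)} \in \hbox{O}(1/n^2)$ (indeed $\Theta(1/n^2)$). For the exponential distribution I would reuse the R\'enyi representation from the proof of the first corollary: for a size-$m$ sample, $E X_{(i)} = \frac{1}{\lambda}\sum_{j=1}^{i}\frac{1}{m-j+1}$, so taking $i = m-1$ and reindexing gives $s_m = \frac{1}{\lambda}\sum_{k=2}^{m}\frac{1}{k}$, whence $s_{n+1} - s_n = \frac{1}{\lambda(n+1)} \in \hbox{O}(1/n)$ (again $\Theta$).

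The Pareto case is the one requiring care. Starting from the displayed moment formula $E X_{(i)} = a\,\frac{m!}{(m-i)!}\,\frac{\Gamma(m+1-i-\frac{1}{v})}{\Gamma(m+1-\frac{1}{v})}$ for a size-$m$ sample and setting $i = m-1$, I get $s_m = a\,\Gamma\!\left(2 - \frac{1}{v}\right)\frac{\Gamma(m+1)}{\Gamma(m+1-\frac{1}{v})}$. Rather than substitute the approximation $\frac{\Gamma(m+1)}{\Gamma(m+1-1/v)} \approx m^{1/v}$ directly into a difference (where the error terms could a priori swamp the leading term), I would use the functional equation $\Gamma(z+1) = z\Gamma(z)$ to get an exact ratio, $\frac{s_{n+1}}{s_n} = \frac{\Gamma(n+2)/\Gamma(n+1)}{\Gamma(n+2-1/v)/\Gamma(n+1-1/v)} = \frac{n+1}{\,n+1-\frac{1}{v}\,}$, so that $s_{n+1} - s_n = s_n\left(\frac{n+1}{n+1-1/v} - 1\right) = s_n\cdot\frac{1/v}{\,n+1-\frac{1}{v}\,}$. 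Now $\frac{1/v}{n+1-1/v} \in \hbox{O}(1/n)$, and $s_n \in \hbox{O}(n^{1/v})$ by the standard ratio-of-gamma-functions asymptotic $\Gamma(x+\alpha)/\Gamma(x) \sim x^{\alpha}$ (the same estimate the paper already invokes), so $s_{n+1} - s_n \in \hbox{O}\!\left(n^{1/v - 1}\right) = \hbox{O}\!\left(1/n^{1 - 1/v}\right)$.

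The main obstacle is exactly this last step: turning the informal $\frac{\Gamma(m+1)}{\Gamma(m+1-1/v)}\approx m^{1/v}$ into a rigorous bound $s_n = \hbox{O}(n^{1/v})$ (an upper bound is all the stated $\hbox{O}$ claim needs). I would get this either by quoting the classical asymptotic expansion of ratios of gamma functions, or, to keep the argument self-contained, by a short induction from the exact recursion $s_{m+1} = s_m\cdot\frac{m+1}{m+1-1/v}$: this gives $s_m = s_2\prod_{k=2}^{m-1}\frac{k+1}{k+1-1/v}$, and since $\log\frac{k+1}{k+1-1/v} = \frac{1}{v(k+1)} + \hbox{O}(1/k^2)$, summing yields $\log s_m = \frac{1}{v}\log m + \hbox{O}(1)$, i.e. $s_m = \Theta(m^{1/v})$. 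A secondary point worth stating is that all three bounds are asymptotics in $n$ with the distribution parameters ($\lambda$, or $a$ and $v$) fixed, and that in the Pareto case the bound is tight, so revenue growth per added bidder decays more slowly for heavier tails (smaller $v$) — which is the comparison the theorem is drawing.
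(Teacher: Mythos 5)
Your proposal is correct, and the uniform and exponential cases coincide with the paper's proof (same order-statistic mean $\frac{i}{n+1}$ for the uniform, same R\'enyi harmonic-sum representation for the exponential, same telescoping differences). The Pareto case is where you genuinely diverge, and for the better. The paper substitutes the asymptotic $\frac{\Gamma(n+2)}{\Gamma(n+2-1/v)} \approx (n+1)^{1/v}$ into $s_{n+1}$ and $s_n$ separately and then differences the two approximations, bounding $(n+1)^{1/v} - n^{1/v}$ via the derivative of $n^{1/v}$. As you point out, this is delicate: the relative error in that gamma-ratio asymptotic is $\hbox{O}(1/n)$, so the absolute error in each term is $\hbox{O}(n^{1/v-1})$ --- exactly the order of the difference being estimated --- and a fully rigorous version of the paper's argument would have to track those error terms. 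Your route sidesteps this entirely by using the functional equation $\Gamma(z+1)=z\Gamma(z)$ to get the exact recursion $s_{n+1}/s_n = \frac{n+1}{n+1-1/v}$, hence the exact identity $s_{n+1}-s_n = s_n\cdot\frac{1/v}{n+1-1/v}$, after which only a one-sided bound $s_n = \hbox{O}(n^{1/v})$ is needed, and your product/logarithm argument (or the classical gamma-ratio asymptotic applied once, to $s_n$ itself rather than to a difference) delivers it. The paper's version is shorter and makes the comparison across the three tail weights visually immediate; yours is the one that actually closes the gap between ``$\approx$'' and a proof, and it additionally yields the sharper $\Theta$ statement in all three cases.
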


\begin{proof}
To examine the expected increase in expected seller revenue from adding a bidder to a competition with $n$ bidders, analyze the difference in expected second-highest value over $n+1$ and over $n$ values for the valuation distributions. For the (non-heavy-tailed) uniform distribution, the expectation of the $i$th order statistic for $n$ samples is $\frac{i}{n+1}$ \cite{david03}. So the expected increase in seller revenue is 
$$ s_{n+1} - s_n = \frac{n}{n + 2} - \frac{n - 1}{n + 1} = \frac{2}{n^2 + 3 n + 2} \in \hbox{O}\left(\frac{1}{n^2}\right).$$

Next, consider the exponential distribution. Recall that
$$ s_{n+1} = \frac{1}{\lambda} \left(\frac{1}{2} + \frac{1}{3} + \ldots + \frac{1}{n+1} \right). $$
So 
$$ s_{n+1} - s_n = \frac{1}{\lambda} \left(\frac{1}{n+1}\right) \in \hbox{O}\left(\frac{1}{n}\right). $$

For the Type I Pareto distribution, 
$$ s_{n+1} \approx a (1 - \frac{1}{v}) (n+1)^{\frac{1}{v}} \Gamma(1 - \frac{1}{v}).$$
So
$$ s_{n+1} - s_n \approx a (1 - \frac{1}{v}) \Gamma(1 - \frac{1}{v}) \left[(n+1)^{\frac{1}{v}} - n^{\frac{1}{v}}\right]. $$
Since 
$$ \frac{\partial}{\partial n} n^{\frac{1}{v}} = \frac{1}{v} \frac{1}{n^{1-\frac{1}{v}}}, $$
$$ s_{n+1} - s_n \in \hbox{O}\left(\frac{1}{n^{1-\frac{1}{v}}}\right). $$
\end{proof}

For the Pareto Type I distribution, as $v \rightarrow \infty$, the limit is O$\left(\frac{1}{n}\right)$, as for the exponential distribution. For $v = 2$, it is O$\left(\frac{1}{\sqrt{n}}\right)$. As $v \rightarrow 1_{+}$, the limit is O$(1)$, indicating that as the tail becomes heavier, the seller gains nearly as much from adding a bidder when there are already many bidders as from adding a bidder when there are only a few.

\section{Conclusion}
We have shown that expected buyer surplus can increase with increased competition if the valuation distribution is sufficiently heavy-tailed. While adding a new bidder can never increase per-bidder expected surplus, it can increase expected surplus over the class of new and previous bidders. As a result, under sufficiently heavy-tailed valuation distributions, each potential bidder should prefer wider participation in auctions for themselves and their fellow bidders over narrower participation for all, if the restrictions on their participation are symmetric to those on other bidders. 

In the future, it would be interesting to explore whether or how heavy-tailed valuation distributions occur in practice. They seem most likely to occur in scenarios in which great gains are possible. For example, if we model approaches to curing serious and widespread health conditions (such as cancer or aging) as bids and the value to society for solutions as valuations, then a heavy-tailed valuation distribution corresponds to the possibility that some creative approach will yield great progress. In such cases, adding efforts to develop creative solutions increases the expected profit for the provider of the best solution (corresponding to buyer surplus if the provider can charge as much as the value of the next-best solution), in addition to increasing the expected benefit to society. 

In some cases, valuation distributions may mimic heavy-tailed distributions over some range, but lack support above some finite upper bound. In these cases, it would be interesting to analyze marginal expected buyer surplus per added bidder as a function of the number of bidders. If the valuation distribution is unknown, but there is bid information from previous auctions with the same distribution, it would be interesting to explore whether it is possible to accurately estimate the number of bidders needed to maximize expected buyer surplus. It would also be interesting to examine how non-i.i.d. valuation distributions affect the results in this paper, especially if some potential bidders have heavy-tailed valuation distributions and others do not. 


\appendix

\section{Heavy Tails and Reserve Prices} \label{optimal_expo}
In this appendix, we examine how using a reserve price to convert the second price auction to the Myerson \cite{myerson81} optimal auction affects our results in this paper. For the exponential distribution, a reserve price decreases expected buyer surplus, but the difference between second-price and Myerson-optimal expected buyer surplus decreases exponentially in the number of bidders. Interestingly, for the Pareto Type I distribution, a reserve price does not make sense, because no reserve price increases expected seller revenue to more than that from a straight second-price auction. 

\subsection{Reserves for Exponential}
First, observe that the exponential distribution meets the required Myerson regularity condition: that
$$ x - \frac{1 - F(x)}{f(x)}  = x - \frac{1}{\lambda}$$
is monotonically increasing in $x$. Set this equal to zero and solve for the optimal reserve price: 
$$r = \frac{1}{\lambda}.$$

\begin{theorem}
For $n$ bidders with valuations drawn i.i.d. from an exponential distribution, optimal-auction expected buyer surplus is $\frac{1}{\lambda} [1 - (1 - e^{-1})^{n}].$
\end{theorem}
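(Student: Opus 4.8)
The plan is to write buyer surplus in the Myerson‑optimal auction explicitly and then integrate by a layer‑cake argument. With valuations i.i.d. exponential and the optimal reserve $r = \tfrac{1}{\lambda}$, the item sells to the highest bidder exactly when $X_{(n)} \ge r$, and the winner pays $\max(X_{(n-1)}, r)$ (the ``second price with reserve'' form of the Myerson auction for i.i.d.\ regular valuations). Hence buyer surplus is $X_{(n)} - \max(X_{(n-1)}, r)$ when $X_{(n)} > r$ and $0$ otherwise. Using $v - p = \int_p^v dt$ for the winner's value $v$ and price $p$, this equals $\int_r^\infty \mathbf{1}[\max(X_{(n-1)},r) \le t < X_{(n)}]\,dt$; since the integrand forces $t \ge r$, and $t \ge r$ together with $t < X_{(n)}$ already implies $X_{(n)} > r$, no separate ``item sold'' indicator is needed, and for $t \ge r$ the event reduces to $\{X_{(n-1)} \le t < X_{(n)}\}$, i.e.\ exactly one of the $n$ bids exceeds $t$.

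Next I would take expectations and swap expectation and integral (Tonelli, as the integrand is nonnegative), giving expected surplus $= \int_r^\infty P(\text{exactly one of } X_1,\dots,X_n \text{ exceeds } t)\,dt = \int_r^\infty n\,(1 - e^{-\lambda t})^{n-1} e^{-\lambda t}\,dt$. The substitution $u = 1 - e^{-\lambda t}$ (so $e^{-\lambda t}\,dt = du/\lambda$, with $t = r$ giving $u = 1 - e^{-1}$ and $t \to \infty$ giving $u \to 1$) turns this into $\tfrac{1}{\lambda}\int_{1-e^{-1}}^{1} n u^{n-1}\,du = \tfrac{1}{\lambda}\big[1 - (1-e^{-1})^n\big]$, which is the claimed value.

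As an alternative route, one could instead condition on $K$, the number of bids above $r$; here $K \sim \mathrm{Binomial}(n, e^{-1})$ because $P(X > \tfrac{1}{\lambda}) = e^{-1}$. When $K = 0$ the surplus is $0$; when $K = k \ge 1$, memorylessness makes the $k$ excesses over $r$ i.i.d.\ exponential$(\lambda)$, so the buyer surplus equals the top spacing of $k$ i.i.d.\ exponentials (or the single excess when $k = 1$), which by the Rényi spacing representation already used in this section has mean $\tfrac{1}{\lambda}$ regardless of $k$; summing over $K$ gives expected surplus $= \tfrac{1}{\lambda}\,P(K \ge 1) = \tfrac{1}{\lambda}\big[1 - (1-e^{-1})^n\big]$.

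The main obstacle is getting the surplus expression right — in particular justifying that the optimal auction charges $\max(X_{(n-1)}, r)$ and correctly handling the boundary case in which only one bid clears the reserve (so the ``second price'' is the reserve). Once the surplus is expressed as the measure of thresholds exceeded by exactly one bid, the remainder is a single substitution, so I expect no further difficulty.
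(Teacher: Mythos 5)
Your proof is correct, and it takes a genuinely different route from the paper's. The paper decomposes expected surplus into two explicit integrals --- one for the event that the second-highest bid falls below the reserve (so the winner pays $r$) and one for the event that it clears the reserve (so the winner pays the second bid) --- and then evaluates both by substitution and integration by parts, relying on a cancellation between the reserve-price term and a boundary term at the end. Your layer-cake argument sidesteps the case split entirely: writing the surplus as $\int_r^\infty \mathbf{1}[X_{(n-1)} \le t < X_{(n)}]\,dt$ correctly absorbs both the ``pays the reserve'' and ``pays the second bid'' cases into the single event that exactly one bid exceeds $t$, and the resulting integral $\int_r^\infty n(1-e^{-\lambda t})^{n-1}e^{-\lambda t}\,dt$ is an exact derivative, so one substitution finishes the proof with no integration by parts and no cancellation to track. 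Your alternative argument via $K \sim \mathrm{Binomial}(n, e^{-1})$ and memorylessness is also valid (the top spacing of $k$ i.i.d.\ exponentials has mean $\frac{1}{\lambda}$ for every $k \ge 1$, covering the $k=1$ boundary case where the price is the reserve) and buys something the paper's computation does not: it explains \emph{why} the answer has the form $\frac{1}{\lambda}\,P(\text{at least one bid clears the reserve})$, namely that conditional on a sale the expected surplus is always exactly the no-reserve value $\frac{1}{\lambda}$. Both of your routes generalize more readily than the paper's, and either would be an acceptable replacement proof.
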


\begin{proof}
The expected buyer surplus is the expected difference between the top bid and the reserve price if the second bid is below the reserve price, and the expected difference between the top two bids otherwise. Using $t$ for the top bid and $s$ for the second-highest bid, this is
\be n \int_{t=r}^{\infty} (t-r) F(r)^{n-1} f(t) \, dt \ee \label{ebs1}
\be + n (n-1) \int_{s=r}^{\infty} F(s)^{n-2} f(s) \, ds \, \int_{t = s}^{\infty} (t-s) f(t) \, dt. \ee \label{ebs2}

The first integral is
$$ = n F(r)^{n-1} \int_{t=r}^{\infty} (t-r) f(t) \, dt $$
$$ = n (1 - e^{-\lambda r})^{n-1} \int_{t=r}^{\infty} (t-r) \lambda e^{-\lambda t} \, dt.$$
Change the variable of integration from $t$ to $t+r$:
\be \int_{t=r}^{\infty} (t-r) \lambda e^{-\lambda t} \, dt \ee \label{ebsin}
$$ = \int_{t = 0}^{\infty} t \lambda e^{-\lambda (t + r)} \, dt $$
$$ = e^{-\lambda r} \int_{t = 0}^{\infty} t \lambda e^{-\lambda t} \, dt. $$
The integral is the mean of the exponential distribution, so it is $\frac{1}{\lambda}$, and Expression 1 is
\be = \frac{n}{\lambda} (1 - e^{-\lambda r})^{n-1} e^{-\lambda r}. \ee \label{ebs1r}

For Expression 2, first examine the inner integral. It is Expression 3, with $s$ instead of $r$. So it is
$$ = e^{- \lambda s} \frac{1}{\lambda}. $$
So Expression 2 is
$$ = n (n-1) \int_{s=r}^{\infty} F(s)^{n-2} \frac{1}{\lambda} f(s) \, ds \,e^{- \lambda s} $$
$$ = n (n-1) \int_{s=r}^{\infty} (1 - e^{-\lambda s})^{n-2} e^{- \lambda s} \, ds \, e^{- \lambda s}. $$
Use integration by parts ($\int u \, dv = u v - \int v \, du$), with $u = e^{- \lambda s}$ and $dv = (1 - e^{-\lambda s})^{n-2} e^{- \lambda s} \, ds$, so $du = - \lambda e^{- \lambda s} \, ds$ and $v = \frac{1}{(n-1) \lambda} (1 - e^{-\lambda s})^{n-1}$:
$$ = \left. n e^{- \lambda s} \frac{1}{\lambda} (1 - e^{-\lambda s})^{n-1} \right|_{s = r}^{\infty} $$
$$ + n \int_{s=r}^{\infty}  (1 - e^{-\lambda s})^{n-1} e^{-\lambda s} \, ds$$
$$ = \left. n e^{- \lambda s} \frac{1}{\lambda} (1 - e^{-\lambda s})^{n-1} \right|_{s = r}^{\infty} $$
$$ + \left. \frac{1}{\lambda} (1 - e^{-\lambda s})^{n} \right|_{s = r}^{\infty}.$$
Recall that $r = \frac{1}{\lambda}$:
$$ = [0 - \frac{n}{\lambda} (1 - e^{-1})^{n-1} e^{-1}] + [\frac{1}{\lambda} - \frac{1}{\lambda} (1 - e^{-1})^{n}] $$
$$ = \frac{1}{\lambda} [1 - (1 - e^{-1})^{n}] - \frac{n}{\lambda} (1 - e^{-1})^{n-1} e^{-1}. $$
The right term is the same as Expression 4, but with a minus sign. So they cancel, giving expected buyer surplus:
$$  \frac{1}{\lambda} [1 - (1 - e^{-1})^{n}]. $$
\end{proof}

For comparison, without a reserve price, the expected seller revenue is $\frac{1}{\lambda}$. 

\subsection{Reserves for Pareto Type I}
Recall that Type I Pareto distributions have cdf:
\[ F(x) = \left\{ \begin{array}{ll}
1 -  \left(\frac{a}{x}\right)^{v} & \mbox{if $x \geq a$}\\
0 & \mbox{if $x < a$} \end{array} \right. , \]
where $a > 0$ and $v > 1$.

\subsubsection{Single Bidder}
For a single bidder and reserve price $r$, expected seller revenue is $ r [1 - F(r)]$: the seller receives $r$ if the bid is at or above $r$ and zero otherwise. (We assume the bidder bids their valuation, because truth-telling is a Nash equilibrium for the second price auction with reserve prices. \cite{vickrey61,myerson81,riley81}) For a bid drawn from a Pareto Type I distribution, if $r \leq a$ then $F(r) = 0$, so expected seller revenue is $r$, indicating that we should set the reserve to at least $a$. 

For $r \geq a$,
$$ r [1 - F(r)] = \frac{a^v}{r^{v-1}}. $$
The derivative of expected seller revenue with respect to $r$ is negative for $r \geq a$:
$$ [\frac{a^v}{r^{v-1}}]' = -(v-1) \frac{a^v}{r^v}.$$
So $a$ is the optimal reserve price for a single bidder.

Note that
$$ \lim_{v \rightarrow 1_{+}} -(v-1) \frac{a^v}{r^v} = 0.$$
So as $v \rightarrow 1_{+}$, any reserve price becomes almost as effective as $r = a$. 

\subsubsection{Multiple Bidders}
But is it possible that a higher reserve price would generate higher expected seller revenue with more bidders? We will show that the answer is no. The expected seller revenue with reserve price $r$ for $n > 1$ bidders is:
$$ h(r) = n \int_{r}^{\infty} r F(r)^{n-1} f(x) \, dx $$
$$ + n (n-1) \int_{r}^{\infty} x F(x)^{n-2} f(x) [1 - F(x)] \, dx. $$
The first term is revenue from the reserve price: $n$ possible top bidders, with revenue $r$ if the other $n-1$ bids are below $r$ (probability $F(r)^{n-1}$) and the top bid is above $r$ (probability $f(x) \, dx$ integrated over $x$ from $r$ to $\infty$). The second term is revenue from the second price: $n (n-1)$ possible top and second bidder combinations, with revenue the second price $x$ if the other $n-2$ bids are below $x$ (probability $F(x)^{n-2}$) and the top bid is above $x$ (probability $[1 - F(x)]$). 

Following the derivation of optimal reserve prices \cite{myerson81,riley81}, we will find $h'(r)$, the derivative of expected seller revenue $h$ with respect to reserve price $r$. If it is negative for all $r > a$, then setting any reserve price above the minimum of the distribution decreases expected seller revenue. Pull constants out of the first integral and reverse the direction of integration of the second:
$$ h(r) = n r F(r)^{n-1} \int_{r}^{\infty} f(x) \, dx $$
$$ - n (n-1) \int_{\infty}^{r} x F(x)^{n-2} f(x) [1 - F(x)] \, dx. $$
To take the derivative, apply the product rule to the first term to get the first three terms below and the fundamental theorem of calculus to the second term to get the last term below:
$$ h'(r) = n F(r)^{n-1} \int_{r}^{\infty} f(x) \, dx $$
$$ + n r (n-1) F(r)^{n-2} f(r) \int_{r}^{\infty} f(x) \, dx $$
$$ + n r F(r)^{n-1} \left[\int_{r}^{\infty} f(x) \, dx\right]' $$
$$ - n (n-1) r F(r)^{n-2} f(r) [1 - F(r)]. $$
Note that
$$ \int_{r}^{\infty} f(x) \, dx = 1 - F(r), $$
and reverse the direction of integration then apply the fundamental theorem of calculus to the derivative of the integral in the third term:
$$ h'(r) = n F(r)^{n-1} [1 - F(r)] $$
$$ + n r (n-1) F(r)^{n-2} f(r) [1 - F(r)] $$
$$ - n r F(r)^{n-1} f(r) $$
$$ - n (n-1) r F(r)^{n-2} f(r) [1 - F(r)]. $$
The second and fourth term cancel. Combine the other two terms:
$$ h'(r) = n F(r)^{n-1} [1 - F(r) - r f(r)]. $$
This is a general expression for the derivative of expected seller revenue with respect to reserve price. Setting it equal to zero and solving for $r$ gives the well-known expression for the optimal reserve price:
$$ r^* = \frac{1 - F(r)}{f(r)}. $$

\subsubsection{Reserve Price Harms Revenue for Pareto}
For the Pareto Type I distribution, the pdf is
$$ f(x) = v \left(\frac{a}{x}\right)^{v} \frac{1}{x}. $$
So
$$ h'(r) = n F(r)^{n-1} [1 - F(r) - r v \left(\frac{a}{r}\right)^{v} \frac{1}{r}] $$
$$ = n F(r)^{n-1} [1 - F(r) - v \left(\frac{a}{r}\right)^{v}]. $$
Since 
$$ 1 - F(r) = \left(\frac{a}{r}\right)^{v}, $$
we have 
$$ h'(r) = n F(r)^{n-1} [1 - F(r)] (1 - v). $$
For $v > 1$, this is negative for all $r > a$, so no reserve price benefits the seller in expectation. 

\subsubsection{Reserves and Competition}
Does adding bidders make the expected losses from a reserve price larger or smaller? Take the ratio of $h'(r)$ for $n+1$ and $n$:
$$ \frac{(n+1) F(r)^{n} [1 - F(r)] (1 - v)}{n F(r)^{n-1} [1 - F(r)] (1 - v)} = \frac{n+1}{n} F(r). $$
If this ratio is greater than one, 
$$ \frac{n+1}{n} F(r) \geq 1, $$
then adding a bidder makes the expected revenue loss worse for the seller. Using the definition of $F$ and solving for $r$ gives: 
$$ r \leq \frac{(n+1)^{\frac{1}{v}}}{a}. $$
Equivalently, solving for n,
$$ n \geq \left(\frac{r}{a}\right)^v - 1.$$
So, for $v > 1$, adding bidders initially decreases the loss in expected revenue from having a reserve price, then beyond about $\left(\frac{r}{a}\right)^v$ bidders, adding more bidders increases the gap between no-reserve expected revenue and expected revenue with a reserve price. In short, as competition increases, the expected harm to the seller from having a reserve price also increases. 

\bibliographystyle{ACM-Reference-Format}
\bibliography{bax} 

\end{document}